\documentclass[conference]{IEEEtran}
\pdfoutput=1

\usepackage[sorting=none,style=ieee,backend=bibtex]{biblatex}
\addbibresource{references.bib}

\usepackage{amsmath}
\usepackage{amsbsy}
\usepackage[noend]{algpseudocode}
\usepackage{graphicx}
\usepackage{mathtools}
\usepackage{algorithm}
\usepackage{amssymb}
\usepackage{eqnarray}
\usepackage{multicol}
\usepackage{listings}
\usepackage{color}
\usepackage{units}
\usepackage{amsthm}
\usepackage{amsmath}
\usepackage{vwcol}
\usepackage{footnote}
\usepackage{blkarray}
\usepackage{comment}
\usepackage{tikz}
\usepackage{textcomp}
\usepackage{hyperref}

\newtheorem{theorem}{Theorem}

\newtheorem{definition}{Definition}

\makesavenoteenv{tabular}
\makesavenoteenv{table}

\allowdisplaybreaks

\ifCLASSINFOpdf


\newcommand{\ra}{\rightarrow}
\newcommand{\mc}[1]{\mathcal{#1}}
\newcommand{\calx}{\mc{X}}
\newcommand{\caly}{\mc{Y}}
\newcommand{\calz}{\mc{Z}}

\newcommand{\tabp}[2]{\theta^{#1}_{#2}}

\newcommand{\ccdiyxz}{I(Y^n\ra X^n \cc Z^n)}
\newcommand{\ccdiryxz}{\bar{I}(Y\ra X \cc Z)}
\newcommand{\xm}[1]{x_{i-#1}}
\newcommand{\ym}[1]{y_{i-#1}}
\newcommand{\zm}[1]{z_{i-#1}}

\newcommand{\R}{\mathbb{R}}
\newcommand{\cc}{\mid\mid}


\begin{document}

\title{On the Bias of Directed Information Estimators}

\author{
Gabriel~Schamberg, \emph{Student Member, IEEE},
Todd~P.~Coleman, \emph{Senior Member,~IEEE}
}

\maketitle



\begin{abstract}
When estimating the directed information between two jointly stationary Markov processes, it is typically assumed that the recipient of the directed information is itself Markov of the same order as the joint process. While this assumption is often made explicit in the presentation of such estimators, a characterization of when we can expect the assumption to hold is lacking. Using the concept of d-separation from Bayesian networks, we present sufficient conditions for which this assumption holds. We further show that the set of parameters for which the condition is not also necessary has Lebesgue measure zero. Given the strictness of these conditions, we introduce a notion of partial directed information, which can be used to bound the bias of directed information estimates when the directed information recipient is not itself Markov. Lastly we estimate this bound on simulations in a variety of settings to assess the extent to which the bias should be cause for concern.
\end{abstract}


\begin{IEEEkeywords}
Directed Information, Estimation, Bias Quantification, Markov
\end{IEEEkeywords}

\IEEEpeerreviewmaketitle


\section{Introduction}
The directed information (DI) is a popular measure of asymmetric relationships between two stochastic processes. Since its origination in 1973 \cite{marko1973bidirectional} and its reemergence in 1990 \cite{massey1990causality}, the DI has been increasingly pervasive throughout science and engineering disciplines. When using the DI to study the inter-process relationships exhibited by real data, i.e. when the true underlying joint statistics are unknown, it is necessary to utilize DI estimation techniques. DI estimators have been studied extensively in the literature using a variety of approaches, including sequential estimation using universal probability assignments \cite{jiao2013universal}, maximum likelihood estimation of generalized linear models for DI between point processes \cite{quinn2011estimating}, $k$-NN estimation \cite{murin2017k}, and plug-in estimation \cite{kontoyiannis2016estimating}. With a couple exceptions, when estimating the DI from $Y$ to $X$, these estimators assume that (i) $X$ and $Y$ are jointly stationary ergodic Markov processes and (ii) $X$ is itself a jointly stationary ergodic Markov process of the same order. While \cite{jiao2013universal} includes theoretical results for the non-Markov setting, only the context tree weighting (CTW) based estimators (which assume (i) and (ii)) are implemented due to the computational complexity of universal probability assignments for general finite-alphabet stationary ergodic sequences. In \cite{kontoyiannis2016estimating} it is noted that when assumption (ii) does not hold, the quantity being estimated is in fact not the DI, but rather an upper bound for the DI. Despite the common adoption of assumptions (i) and (ii), the conditions under which they hold and the implications when they do not are not well studied. Our present work seeks to fill this gap in order to ensure that the estimation of DI across scientific disciplines can be conducted in a manner such that the results are reliable.

Relevant discussions regarding the issues surrounding assumption (ii) have been held in the literature on Granger causality (GC) \cite{granger1969investigating}. GC can be viewed as a special case of DI where the processes in question obey a vector autoregressive (VAR) model with Gaussian noise. It is noted in the GC literature that subsets of finite-order VAR processes are in general infinite order autoregressive processes \cite{stokes2017study}. Thus, estimating a ``restricted'' model (i.e. one where the candidate influencer is hidden) from data requires estimating a truncated model and induces a bias-variance trade-off. For the linear Gaussian case, this issue can be avoided by computing the restricted model directly from the full model using the Yule-Walker equations \cite{barnett2014mvgc}. Unfortunately, there is no clear extension of this approach for arbitrary Markov processes, and other techniques are required.

We here employ a Bayesian network perspective to identify when the independence statements required by DI estimators hold. In particular, by representing a collection of interacting processes as a Bayesian network, we can use the d-separation criterion to identify conditional independencies in relevant subsets of the network. Using this perspective, we provide sufficient conditions under which assumptions (i) and (ii) are satisfied and show that these conditions are also necessary with the exception of a set of parameters with Lebesgue measure zero. We further present a bound for the estimation bias that can be estimated reliably without requiring assumption (ii). Finally, to understand the magnitude of the biases in question, we compute the proposed bound for simulated processes in a variety of problem settings.
\section{Preliminaries}
\subsection{Notation}
We will be considering collections of jointly stationary discrete processes $X$, $Y$, and $Z$, where, at any time $i$, $X_i\in\calx$, $Y_i\in\caly$, and $Z_i\in\calz$. Without loss of generality, $Z$ may represent a collection of processes $(Z^{(1)},\dots,Z^{(m)})\in \calz_1\times\dots\times\calz_m\triangleq\calz$. Collections of samples are indicated with superscripts as $X_i^{i+k}\triangleq\{X_i,\dots,X_{i+k}\}$ and $X^n\triangleq X_1^n$. In general, capital letters will represent random entities and lower case letters will represent their realizations. When a process is Markov of order $d$ we will refer to it as $d$-Markov, unless $d=1$, in which case we will simply refer to it as Markov. We will use $p$ to represent probability distributions, with the specific distribution being made clear from context.

\subsection{Directed Information}
Consider a collection of processes $(X,Y,Z)$. Define the causally conditional DI from $Y$ to $X$ given $Z$ as:
\begin{align}
&\ccdiyxz
= \sum_{i=1}^n I(X_i;Y^i\mid X^{i-1},Z^i) \\
&= \sum_{i=1}^n H(X_i\mid X^{i-1},Z^i) - H(X_i\mid X^{i-1},Y^i,Z^i)\label{di_hdiff}
\end{align}
\noindent and the associated causally conditional DI rate (when it exists) as:
\begin{equation}\label{dirate}
\ccdiryxz=
\lim_{n\ra\infty}\frac{1}{n}I(Y^n\ra X^n\cc Z^n).
\end{equation}
\noindent In the context of a collection of processes, the aforementioned assumptions are: (i) $(X,Y,Z)$ are jointly $d$-Markov, i.e. the second entropy term in \eqref{di_hdiff} can be simplified to $H(X_i\mid X^{i-1}_{i-d},Y^i_{i-d},Z^i_{i-d})$ and (ii) $X$ is ``conditionally $d$-Markov given $Z$'', i.e. the first entropy term can be simplified as $H(X_i\mid X^{i-1}_{i-d},Z^i_{i-d})$. Once these assumptions are made, it is clear that the DI can be estimated from data by splitting a stream $(X^n,Y^n,Z^n)$ into a collection of samples $\{(X^i_{i-d},Y^i_{i-d},Z^i_{i-d})\}_{i=d}^n$ and estimating the appropriate distributions using the methods of \cite{murin2017k,kontoyiannis2016estimating,quinn2011estimating,jiao2013universal}. The goal of this work is to understand when we can expect both of these assumptions to hold, and to understand what the consequences are of assuming they both hold when in fact only the first holds. It should be noted that while we only consider networks of processes and the causally conditional DI as above, all of the results hold when $Z=\emptyset$, in which case the standard DI is recovered and the assumptions above revert to the assumptions discussed in the introduction.

\subsection{Bayesian Networks}
To understand the conditions under which the desired independence relationships hold, we can use Bayesian networks, which represent conditional independencies in collections of random variables using a directed acyclic graph (DAG) $G=(V,E)$, where $V=\{V^{(1)},\dots,V^{(m)}\}$ is a set of random variables (equivalently nodes or vertices) and $E\subset V\times V$ is a set of directed edges that does not contain any cycles \cite{spirtes2000causation}. The parent set of a node $V^{(i)}$ in a DAG is defined as the set of nodes with arrows going into $V^{(i)}$, $\mc{P}_{V^{(i)}} \triangleq \{V^{(j)}:(V^{(j)} \ra V^{(i)})\in E\}$. The defining characteristic of a Bayesian network representation of a joint distribution over the nodes $V\sim p$ is the ability to factorize the distribution as:
\begin{equation}\label{factorization}
p(V) = \prod_{i=1}^m p(V_i \mid \mc{P}_{V^{(i)}}).
\end{equation}
\noindent If this factorization holds for a given $p$ and $G$, we say $G$ is a Bayesian network for $p$. A key concept when working with Bayesian networks is the d-separation criterion, which is used to identify subsets of nodes whose conditional independence is implied by the graphical structure. In particular, when given three disjoint subsets of nodes $A,B,C\subset V$ in a graph $G$, a straightforward algorithm (shown in Algorithm \ref{alg:dsep}) can be used to determine if $C$ d-separates $A$ and $B$. When $C$ d-separates $A$ and $B$, then for any joint distribution $p(V)$ such that $G$ is a Bayesian network for $p$, $A$ and $B$ will be conditionally independent given $C$. While the converse is not true in general (i.e. independence does not imply d-separation), it has been shown that for specific classes of Bayesian networks, the set of parameters for which the converse does \emph{not} hold has Lebesgue measure zero \cite{spirtes2000causation,meek2013strong}. When a graph $G$ and joint distribution $p$ are such that d-separation holds if and only if conditional independence holds for all subsets of nodes, then the distribution $p$ is called ``faithful'' to $G$ \cite{spirtes2000causation}.
\begin{algorithm}[H]
\caption{d-Separation \cite{lauritzen1990independence}} \label{alg:dsep}
\hspace*{\algorithmicindent} \textbf{Input}: DAG $G=(V,E)$ and disjoint sets $A,B,C\subset V$
\begin{algorithmic}[1]
\State Create a subgraph containing only nodes in $A$, $B$, or $C$ or with a directed path to $A$, $B$, or $C$
\State Connect with an undirected edge any two variables that share a common child
\State For each $c\in C$, remove $c$ and any edge connected to $c$ 
\State Make every edge an undirected edge
\State Conclude that $A$ and $B$ are d-separated by $C$ if and only if there is no path connecting $A$ and $B$
\end{algorithmic}
\end{algorithm}

\section{Characterization of Processes with Conditional Markovicity}
\subsection{Network Representation of Markov Processes}
A Bayesian network is a very natural representation for collections of Markov processes. In particular, using the chain rule to factorize the joint distribution over $n$ time steps of the processes $(X,Y,Z)$ yields:
\begin{equation}
p(X^n,Y^n,Z^n) 
=\prod_{i=1}^n
p(X_i,Y_i,Z_i \mid X^{i-1}_{i-d},Y^{i-1}_{i-d},Z^{i-1}_{i-d})\label{factorized}.
\end{equation}
\noindent We next make the additional assumption {\bf(A1)} that $X_i$, $Y_i$, and $Z_i$ are pairwise conditionally independent given the past $\{X^{i-1}_{i-d},Y^{i-1}_{i-d},Z^{i-1}_{i-d}\}$. This assumption facilitates construction of a Bayesian network, as we can rely on the arrow of time to determine the direction of arrows in the network. In the absence of {\bf(A1)}, we cannot construct a \emph{unique} Bayesian network representation of Markov processes without making alternative assumptions. This is similar reasoning to that of \cite{quinn2015directed}, where {\bf(A1)} is used for establishing the equivalence between DI graphs and minimal generative model graphs. Under {\bf(A1)}, we can further simplify \eqref{factorized} as:
\begin{equation}
p(X^n,Y^n,Z^n) 
=\prod_{i=1}^n \prod_{S\in\{X_i,Y_i,Z_i\}} 
p(S \mid X^{i-1}_{i-d},Y^{i-1}_{i-d},Z^{i-1}_{i-d})\label{time_network}.
\end{equation}
\noindent Comparing \eqref{factorization} and \eqref{time_network}, it is clear that we can represent a collection of processes as a Bayesian network by letting each node be a single time point of a process (i.e. $X_i$, $Y_i$, or $Z_i$) with parents $\mc{P}_{X_i},\mc{P}_{Y_i},\mc{P}_{Z_i}\subseteq\{X^{i-1}_{i-d},Y^{i-1}_{i-d},Z^{i-1}_{i-d}\}$. In general, there may be multiple valid Bayesian networks for a particular distribution. In this case, we note that $X_i$, $Y_i$, and $Z_i$ may not all depend on the entire set $\{X^{i-1}_{i-d},Y^{i-1}_{i-d},Z^{i-1}_{i-d}\}$. Thus, we construct a unique Bayesian network for $(X,Y,Z)$ by including an edge $S_{i-k}\ra S'_{i}$ for $S,S'\in \{X,Y,Z\}$ and $k=1,\dots,d$ if and only if:
\begin{equation}\label{construction}
I(S_{i-k};S'_i\mid\{X^{i-1}_{i-d},Y^{i-1}_{i-d},Z^{i-1}_{i-d}\}\setminus S_{i-k})>0.
\end{equation}

\subsection{Necessary and Sufficient Conditions for d-Separation}
Using the Bayesian network construction given by \eqref{construction}, we can leverage the d-separation criterion to gain a better understanding of the types of conditions which give rise to the conditional independence relationships needed for DI estimation. To start, we identify necessary and sufficient conditions for which $X_i$ will be d-separated from $(X^{i-l-1},Z^{i-l-1})$ by $(X^{i-1}_{i-l},Z^{i-1}_{i-l})$. In other words, the following theorem gives us a characterization of processes that are guaranteed to have the conditional independence relationships typically assumed by DI estimators:
\begin{theorem} \label{thm:dsep}
Let $(X,Y,Z)$ be a collection of jointly stationary $d$-Markov processes satisfying {\bf(A1)}. If $\ccdiyxz=0$, then $X$ is conditionally $d$-Markov given $Z$. If $\ccdiyxz>0$, $X$ is conditionally Markov given $Z$ of order $2d$ or less if:
\begin{equation}\label{sufficient}
I(Y_j;Y_k\mid X^i,Z^i)=0 \ \forall j < k \le i
\end{equation}
If $\ccdiyxz>0$ but \eqref{sufficient} is not satisfied, there will not exist any positive integer $l$ such that $(X_{i-l}^{i-1},Z_{i-l}^{i-1})$ d-separates $X_i$ from $(X^{i-l-1},Z^{i-l-1})$ in the Bayesian network generated according to \eqref{construction}.
\end{theorem}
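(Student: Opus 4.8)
The plan is to prove the statement directly: fix $i$ and show that for \emph{every} positive integer $l$, the conditioning set $C\triangleq(X^{i-1}_{i-l},Z^{i-1}_{i-l})$ fails to d-separate $A\triangleq\{X_i\}$ from $B\triangleq(X^{i-l-1},Z^{i-l-1})$. Following Algorithm~\ref{alg:dsep}, it suffices to exhibit, for each $l$, a path from $X_i$ to some node of $B$ that survives moralization and the removal of $C$. I would build this path from two ingredients supplied by the hypotheses $\ccdiyxz>0$ and the negation of \eqref{sufficient}, and then extend it arbitrarily far into the past using stationarity.

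First, the two ingredients. Since $\ccdiyxz>0$, stationarity forces every per-step term $I(X_i;Y^{i-1}_{i-d}\mid X^{i-1}_{i-d},Z^{i}_{i-d})$ to be positive, so by the factorization \eqref{time_network} the node $X_i$ must possess a $Y$-parent; by \eqref{construction} and (A1) (which forbids contemporaneous edges) this is an edge $Y_{i-k_0}\ra X_i$ for some fixed lag $k_0\in\{1,\dots,d\}$. This anchors a path at $X_i$ through the node $Y_{i-k_0}$, which lies outside $C$. Second, the failure of \eqref{sufficient} yields indices $j<k\le i$ with $I(Y_j;Y_k\mid X^i,Z^i)>0$, so $Y_j\not\perp Y_k\mid(X^i,Z^i)$; by the soundness of d-separation established above, $Y_j$ and $Y_k$ are d-connected given the set $X^i\cup Z^i$. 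Because that conditioning set is exactly all $X$- and $Z$-nodes at times $\le i$, and the ancestral subgraph for this query contains only nodes at times $\le i$, the connecting path in the moral graph runs \emph{through $Y$-nodes only}. This is a stationary ``bridge'' linking two $Y$-nodes separated by $\Delta\triangleq k-j\ge1$ in time.

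Next, the extension. The edges produced by \eqref{construction} depend only on lags, so the bridge may be time-shifted to connect $Y_t$ to $Y_{t-\Delta}$ for any $t$, always through $Y$-nodes. Starting from $Y_{i-k_0}$ and chaining $r$ shifted copies, I reach $Y_{t^*}$ with $t^*=i-k_0-r\Delta$; choosing $r$ so that $t^*\le i-l-1-k_0$ and applying the stationary edge $Y_{t^*}\ra X_{t^*+k_0}$ lands on $X_{t^*+k_0}\in B$. The concatenation $X_i-Y_{i-k_0}-\dots-Y_{t^*}-X_{t^*+k_0}$ has all of its interior nodes in $Y$, none of which belongs to $C$; hence it persists after $C$ is removed, certifying d-connection. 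Since $l$ was arbitrary, no $l$ admits d-separation.

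The crux is the activeness of the bridge's moral (co-parent) edges with respect to the \emph{finite} window $C$ rather than the full history $X^i\cup Z^i$. A co-parent edge requires its shared child to sit in the ancestral subgraph of $A\cup B\cup C$. The favorable observation is that $A\cup B\cup C$ together comprise \emph{all} $X$- and $Z$-nodes at times $\le i$ except $Z_i$: any shared child at a time $\le i-1$ is therefore present, and if it lies in $C$ its removal still leaves the co-parent edge intact (exactly the effect exploited by the bridge). The delicate point is the recent boundary, where shifting a bridge copy up against $X_i$ can push a shared child to a time $\ge i$, at which it is neither in $A\cup B\cup C$ nor an ancestor of it, so the needed edge is absent. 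I expect this boundary bookkeeping to be the main obstacle; I would resolve it by anchoring the bridge at $X_i$ itself as the shared child when $X_i$ has two $Y$-parents, and otherwise inserting a bounded backward gap and taking $l$ large so that every shared child of every copy falls inside $[i-l,i-1]$, with a short separate check ruling out the lone $Z_i$ exception.
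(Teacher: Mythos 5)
There is a genuine gap: Theorem~\ref{thm:dsep} makes three assertions, and your proposal proves only the third. You give no argument for the claim that $\ccdiyxz=0$ implies $X$ is conditionally $d$-Markov given $Z$ (in the paper this follows by dropping $Y^{i-1}_{i-d}$ from $p(X_i\mid X^{i-1}_{i-d},Y^{i-1}_{i-d},Z^{i-1}_{i-d})$), and, more substantively, none for the sufficiency claim that \eqref{sufficient} forces $X$ to be conditionally Markov given $Z$ of order at most $2d$. That second claim is where the paper does real analytic work: it expands $p(X_i\mid X^{i-1},Z^{i-1})$ as a sum over $y_{i-d}^{i-1}$, uses the conditional independence of the $Y_j$'s given $(X^{i-1},Z^{i-1})$ supplied by \eqref{sufficient} to factor $p(y_{i-d}^{i-1}\mid X^{i-1},Z^{i-1})$ into per-coordinate terms, then applies joint $d$-Markovicity and the Markov past--future contraction to shrink every conditioning set to the window $[i-2d,i-1]$. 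Nothing in your d-separation construction substitutes for this computation; as written, the proposal establishes at most the necessity direction of the theorem.

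On the third assertion, your argument is essentially the paper's: positivity of $\ccdiyxz$ plus stationarity of the construction \eqref{construction} yields a fixed-lag edge $Y_{t-k_0}\ra X_t$ (the paper's $\tau_2$); failure of \eqref{sufficient} plus soundness of d-separation yields a d-connecting ``bridge'' between two $Y$-nodes, which, because the conditioning set $X^i\cup Z^i$ removes all $X,Z$ nodes at times $\le i$, runs through $Y$-nodes only (the paper's $\tau_1$ link, stated there somewhat loosely as a single edge); stationary shifting and chaining then reach $X_{t^*+k_0}\in(X^{i-l-1},Z^{i-l-1})$ for any $l$. Indeed your choice of $r$ with $t^*\le i-l-1-k_0$ is the correct one where the paper's ``$r\tau_1>d$'' should read $r\tau_1>l$, and your final paragraph correctly identifies a boundary subtlety the paper glosses over entirely: moral edges survive only when the shared child lies in the ancestral subgraph of $A\cup B\cup C$, which excludes $Z_i$ and all nodes at times $\ge i$ (and, worse than you state, $Y$-children at times in $(i-k_0,i-1]$ need not be ancestors of the target set either). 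But your proposed repair ``taking $l$ large'' is not by itself legitimate, since the theorem requires failure of d-separation for \emph{every} positive $l$, including $l=1$; it becomes legitimate only once you add the (true, easy, but unstated) monotonicity observation that the union $A\cup B\cup C=X^i\cup Z^{i-1}$ --- hence the moralized ancestral graph --- is the same for all $l$, while $C$ shrinks and $B$ grows as $l$ decreases, so a d-connecting path for a large $l$ certifies d-connection for every smaller $l$ as well. With that observation added and the sufficiency half of the theorem supplied, your part-3 argument would actually be tighter than the paper's; without them, the proposal is incomplete.
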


\begin{proof}
The first statement of the theorem follows trivially from the removal of $Y^{i-1}_{i-d}$ from $p(X_i \mid X^{i-1}_{i-d},Y^{i-1}_{i-d},Z^{i-1}_{i-d})$. Now assume that \eqref{sufficient} holds. Note that:
\begin{align}
&p(X_i \mid X^{i-1},Z^{i-1})\nonumber\\
&=\sum_{y_{i-d}^{i-1}} 
p(X_i \mid X^{i-1},y_{i-d}^{i-1},Z^{i-1})
\prod_{j=i-d}^{i-1}p(y_j \mid X^{i-1},Z^{i-1}) \label{indep} \\
&=\sum_{y_{i-d}^{i-1}} 
p(X_i \mid X^{i-1}_{i-d},y_{i-d}^{i-1},Z^{i-1}_{i-d})
\prod_{j=i-d}^{i-1}p(y_j \mid X_{j-d}^{i-1},Z^{i-1}_{j-d}) \label{jmarkov} \\
&=\sum_{y_{i-d}^{i-1}} 
p(X_i \mid X^{i-1}_{i-2d},y_{i-d}^{i-1},Z^{i-1}_{i-2d})
\prod_{j=i-d}^{i-1}p(y_j \mid X_{i-2d}^{i-1},Z^{i-1}_{i-2d}) \label{addish_condish}\\
&= p(X_i \mid X_{i-2d}^{i-1},Z^{i-1}_{i-2d}) \nonumber
\end{align}

\noindent where \eqref{indep} follows from the chain rule and the conditional independence of $y_{i-d}^{i-1}$ given $(X^{i-1},Z^{i-1})$, \eqref{jmarkov} follows from the joint Markovicity of $X$ and $Y$ and the conditional independence of $y_{i-d}^{i-1}$, and \eqref{addish_condish} follows from the conditional independence of the past and the future given the present for Markov processes. Thus it follows that $X$ is conditionally Markov given $Z$ of order at most $2d$.

Now assume $\ccdiyxz>0$ but \eqref{sufficient} does not hold. Then we will show there is no positive integer $l$ such that $(X_{i-l}^{i-1},Z_{i-l}^{i-1})$ d-separates $(X^{i-l-1},Z^{i-l-1})$ from $X_i$. To do this, we first note that $(X^i,Z^i)$ does not d-separate $Y_j$ and $Y_k$, because if it did, they would be conditionally independent. As such, when performing the d-separation algorithm given by Algorithm \ref{alg:dsep}, $Y_j$ and $Y_k$ will be connected by an undirected edge after completing step 4. Furthermore, if we let $\tau_1 = k-j$, then by the joint stationarity of $(X,Y,Z)$, \emph{every} $Y_i$ will be connected to $Y_{i-\tau_1}$ at the end of step 4. Furthermore, we know that $I(Y^n\ra X^n\mid\mid Z^n)>0$ implies that for some $q\le m$, there is a directed edge from $Y_q$ to $X_m$. Letting $\tau_2 = m-q$, we know from the joint stationarity of $(X,Y,Z)$ that for every $X_i$, there is an incoming directed edge from $Y_{i-\tau_2}$. As such, at the end of step 4, every $X_i$ will be part of an undirected path connecting $Y_{i-\tau_2}$, $Y_{i-\tau_2-\tau_1}$, $Y_{i-\tau_2-2\tau_1}$, $\dots$. Thus, for any $l\ge 1$ this path can be followed $r$ steps such that $r\tau_1>d$. Then we know that $Y_{i-\tau_2-r\tau_1}$ is connected via an undirected edge to $X_{i-\tau_2-r\tau_1+\tau_2}=X_{i-r\tau_1}$. Recalling that in step 3 of the d-separation algorithm, $(X_{i-l}^{i-1},Z_{i-l}^{i-1})$ have been removed from the graph, we note that since $i-r\tau_1<i-l$, $X_{i-r\tau_1}$ is in the graph. Thus, there is an undirected path connecting $X_{r\tau_1} \in X^{i-l-1}$ and $X_i$, which implies that $(X_{i-l}^{i-1},Z_{i-l}^{i-1})$ does not d-separate $(X^{i-l-1},Z^{i-l-1})$ and $X_i$ for any $l$.
\end{proof}
We can see that the conditions presented by Theorem \ref{thm:dsep} are rather restrictive. With regard to the processes for which we cannot \emph{guarantee} the desired conditional independence relations (i.e. those not satisfying \eqref{sufficient}), the only distributions for which the assumptions in question hold are those that are \emph{unfaithful} to their graphs. While there is ample discussion in the literature noting that these distributions are typically not seen in practice (see \cite{spirtes2000causation} and citations therein), a formal characterization within the present context is desired.

\subsection{Completeness of d-Separation}
For a DAG $G=(V,E)$, define $\Gamma_G\subset\R^M$ to represent the set of $M$ parameters needed to specify \emph{all} discrete distributions $p(V)$ such that the $G$ is a Bayesian network for $p$. Further define $\Gamma_G^u\subset\Gamma_G$ to be the subset of those distributions that are unfaithful to $G$. Then, it was shown in \cite{meek2013strong} the $\Gamma_G^u$ has Lebesgue measure zero with respect to $\R^M$. Unfortunately, this result cannot be directly applied to our problem. Let $\Theta_G\subset\R^N$ represent the set of parameters defining discrete jointly stationary $d$-Markov processes satisfying {\bf(A1)} for which $G$ gives the Bayesian network constructed according \eqref{construction}. Defining the probabilities $\theta^{s_i}_{x,y,z}\triangleq p(s_i\mid x_{i-d}^{i-1},y_{i-d}^{i-1},z_{i-d}^{i-1})$ for $s\in\{x,y,z\}$, we can see that $N\triangleq (|\calx|+|\caly|+|\calz|-3)|\calx|^d|\caly|^d|\calz|^d$ many of these parameters uniquely define such a process. For a particular process, the collection of all these parameters is given by $\theta\in\Theta_G \subset \mathbb{R}^N$. Next define $\Theta_G^u\subset\Theta_G$ to be the subset of parameterizations such that the distribution $p$ induced by $\theta \in \Theta_G^u$ is unfaithful to $G$. It is clear that, due to the stationarity constraint, $N<<M$, and the Lebesgue measure of $\Gamma^u_G$ with respect to $\R^M$ does not tell us what the Lebesgue measure of $\Theta^u_G$ is with respect to $\R^N$. We seek to know when we can expect $X$ to be conditionally $d$-Markov given $Z$ despite the conditional independence not being implied by d-separation, i.e. when $p(X^n,Y^n,Z^n)$ is unfaithful. Using a similar technique to \cite{meek2013strong}, the following theorem states that, when $d=1$, the set of such parameters has Lebesgue measure zero:

\begin{theorem}\label{thm:necessary}
The set of parameters defining a collection $(X,Y,Z)$ of jointly stationary irreducible aperiodic Markov processes such that there exists a positive integer $l$ where $X$ is conditionally $l$-Markov given $Z$ but $(X_{i-l}^{i-1},Z_{i-l}^{i-1})$ does not d-separate $X_i$ from $(X^{i-l-1},Z^{i-l-1})$ in the Bayesian network constructed by \eqref{construction} has Lebesgue measure zero with respect to $\R^N$.
\end{theorem}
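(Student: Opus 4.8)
The plan is to follow the broad strategy of \cite{meek2013strong}: express the conditional independence of interest as the vanishing of real-analytic functions of the model parameters, and then invoke the standard fact that a not-identically-zero real-analytic function on a connected open domain has a zero set of Lebesgue measure zero. The obstacle flagged in the text is that stationarity confines us to the $N$-dimensional slice $\R^N$ rather than the ambient $\R^M$, so the non-vanishing must be re-established \emph{inside} this slice. To set up, I would take the parameter domain to be the open set $\Omega\subset\R^N$ on which every transition parameter $\theta^{s_i}_{x,y,z}$ is strictly positive; this set is convex, hence connected, and on it the joint chain $(X,Y,Z)$ has a strictly positive transition matrix and is therefore automatically irreducible and aperiodic. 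By {\bf(A1)} each entry of the joint transition matrix $P$ is a product of three of the $\theta$'s, and the stationary vector $\pi$ is the normalized solution of $\pi P=\pi$, which by Cramer's rule is a ratio of polynomials in the entries of $P$ with nonvanishing denominator on $\Omega$. Consequently every finite-window joint probability is rational in $\theta$, every marginal obtained by summing out $Y$ is again rational, and every conditional probability appearing below is a ratio of such rationals, i.e.\ real-analytic on $\Omega$.

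Next I would dispose of the graph structure. The parameters for which the constructed graph $G$ is \emph{not} the complete time-invariant graph $G^\ast$ (all edges $S_{i-1}\ra S'_i$ present) lie in the zero set of at least one conditional mutual information of the form \eqref{construction}; since each such mutual information is a nontrivial analytic function of $\theta$, its zero set, and hence the finite union $\bigcup_{G\neq G^\ast}\Theta_G$, has measure zero and can be ignored. It therefore suffices to work on $\Theta_{G^\ast}\subseteq\Omega$, where the graph contains both the $Y_{i-1}\ra Y_i$ and $Y_{i-1}\ra X_i$ edges, so the moralized active path exhibited in the proof of Theorem \ref{thm:dsep} is present and $(X_{i-l}^{i-1},Z_{i-l}^{i-1})$ fails to d-separate $X_i$ from $(X^{i-l-1},Z^{i-l-1})$ for \emph{every} $l$ as a purely graphical fact. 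For fixed $l$, let $B_l\subseteq\Omega$ be the set where $X$ is conditionally $l$-Markov given $Z$. By stationarity and martingale convergence, $B_l$ is the common zero set of the analytic functions $g_{l,k}(\theta)=p(x_i\mid x_{i-l-k}^{i-1},z_{i-l-k}^{i-1})-p(x_i\mid x_{i-l}^{i-1},z_{i-l}^{i-1})$ over conditioning depths $k\ge1$ and all realizations. If even one $g_{l,k}$ is not identically zero on the connected open set $\Omega$, then $B_l\subseteq\{g_{l,k}=0\}$ is null.

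Everything thus reduces to a single existence statement, which I expect to be the hard part: for each $l$, exhibit one \emph{witness} parameter $\theta^\ast\in\Theta_{G^\ast}$ at which $X$ fails to be conditionally $l$-Markov given $Z$ (equivalently, some $g_{l,k}(\theta^\ast)\neq0$). The intuition is the Granger-causality observation from the introduction that marginalizing out an influencing process generically yields unbounded memory; concretely, the d-separation failure in Theorem \ref{thm:dsep} is witnessed by an undirected path reaching $X_i$ from the arbitrarily distant past through the chain $Y_{i-1},Y_{i-2},\dots$ after moralization. I would build $\theta^\ast$ so that each edge along this active path carries genuine dependence --- for instance by taking the $Y_{i-1}\ra Y_i$ and $Y_{i-1}\ra X_i$ kernels to be near-permutation maps, perturbed slightly to stay in $\Omega$ --- and then verify on a finite window that deepening the conditioning changes the law of $X_i$. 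The delicate point is that a naive witness might make the depth-$(l{+}1)$ difference vanish while genuine dependence only surfaces at some larger depth $l{+}k$; producing failure at \emph{any} depth suffices, but the construction must be explicit enough to certify one nonzero $g_{l,k}$ while remaining a bona fide stationary chain, so that $\pi$ is forced rather than free. This is precisely where the ambient argument of \cite{meek2013strong}, in which the witness distribution may be chosen freely, must be replaced by a stationary construction.

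Finally, with $B_l$ null for each $l$, the set in the theorem statement, restricted to $\Theta_{G^\ast}$, is contained in $\bigcup_{l\ge1}B_l$, a countable union of null sets and hence null; adding the already-null contribution of the non-complete graphs gives the claim on all of $\R^N$. I would close by remarking that the restriction to $d=1$ enters only through the witness: for $d>1$ one lifts to the order-$d$ state and realizes the same active-path dependence on tuples, which is more involved but structurally identical.
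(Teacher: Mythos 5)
Your proposal is correct and follows essentially the same route as the paper: both reduce conditional $l$-Markovicity to a polynomial constraint on $\theta$ obtained by expressing the stationary distribution via Cramer's rule (cf.\ \eqref{constraint} and \eqref{hipi}), invoke the fact that the zero set of a non-trivial polynomial has Lebesgue measure zero (the paper cites \cite{okamoto1973distinctness}), and take a countable union over $l$. The one step you candidly flag as the hard part---exhibiting a stationary witness parameter certifying that the constraint is not identically zero---is likewise left unproven in the paper, which asserts non-triviality ``by constructing a counterexample'' omitted for brevity, so your treatment (including the useful observation that a witness failing at \emph{any} conditioning depth suffices, and the measure-zero disposal of non-complete graphs) matches and slightly sharpens the paper's argument.
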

\begin{proof}
We will show that the statement holds for a fixed $l$, noting that a countably infinite union of measure zero sets has measure zero. First note that, if $X$ is conditionally $l$-Markov given $Z$, then for any $x_{i-l-1}^{i-1}\in\calx^l, x'_{i-l-1}\in\calx$, $z_{i-l-1}^{i-1}\in\calz^l, z'_{i-l-1}\in\calz$, the following equality must hold:
\begin{equation}\label{constraint}
    p(x_i \mid x_{i-l-1}^{i-1},z_{i-l-1}^{i-1})=p(x_i \mid \tilde{x}_{i-l-1}^{i-1},\tilde{z}_{i-l-1}^{i-1})
\end{equation}
\noindent where we define $\tilde{x}_{i-l-1}^{i-1}\triangleq \{x_{i-l}^{i-1},x'_{i-l-1}\}$ and $\tilde{z}_{i-l-1}^{i-1}\triangleq \{z_{i-l}^{i-1},z'_{i-l-1}\}$. We will demonstrate that the equation given by \eqref{constraint} amounts to solving a polynomial function of the parameters $\theta$. It is shown in \cite{okamoto1973distinctness} that the set of solutions to a non-trivial polynomial (i.e. one that is not solved by all of $\R^N$) will have Lebesgue measure zero with respect to $\R^N$. Focusing on the left side of \eqref{constraint}, we see that:
\begin{align}
&p(x_i \mid x_{i-l-1}^{i-1},z_{i-l-1}^{i-1}) =\sum_{\ym{l-1}^{i-1}} \tabp{x_i}{x,y,z}p(\ym{l-1}^{i-1}\mid \xm{l-1}^{i-1},\zm{l-1}^{i-1}) \nonumber \\
&=\sum_{\ym{l-1}^{i-1}} \tabp{x_i}{x,y,z}\frac{p(\xm{l-1}^{i-1},\ym{l-1}^{i-1}, \zm{l-1}^{i-1})}
{p( \xm{l-1}^{i-1},\zm{l-1}^{i-1})} \nonumber \\
&=\frac
{\sum_{\ym{l-1}^{i-1}} \tabp{x_i}{x,y,z}\pi(\xm{l-1},\ym{l-1}, \zm{l-1})
\prod_{j=1}^{l}\tabp{(x,y,z)_{i-j}}{x,y,z}}
{\sum_{\tilde{y}_{i-l-1}^{i-1}}\pi(\xm{l-1},\tilde{y}_{i-l-1}, \zm{l-1})
\prod_{j=1}^{l}\tabp{(x,\tilde{y},z)_{i-j}}{x,\tilde{y},z}}\label{hipi}
\end{align}
\noindent where $\pi:\calx\times\caly\times\calz\ra [0,1]$ is the invariant distribution and $\tabp{(x,y,z)_{i}}{x,y,z}\triangleq\tabp{x_i}{x,y,z}\tabp{y_i}{x,y,z}\tabp{z_i}{x,y,z}$. Next, define a matrix $A\in\R^{|\calx||\caly||\calz|\times|\calx||\caly||\calz|}$ containing the transition probabilities, i.e. $A_{j,k}=\tabp{R_k}{R_j}$ some enumeration $R$ over the $|\calx||\caly||\calz|$ possible values taken by $(X,Y,Z)$. Then we can represent $\pi$ in vector form $\vec{\pi}\in[0,1]^{|\calx||\caly||\calz|}$ as a solution to $\vec{\pi}=\vec{\pi} A$. Given $(A^T-I)\vec{\pi}=0$, it is straightforward to show that each element of $\vec{\pi}_j$ (and thus each value of $\pi(x,y,z)$) can be written as fractions of polynomial functions of the entries of $A$, each of which is one of the parameters in $\theta$. As such, \eqref{hipi} can be written using fractions of polynomial functions of $\theta$. Repeating this process, we can see that the same applies to the RHS of \eqref{constraint}. Thus, we can represent \eqref{constraint} as a polynomial function of $\theta$ by recursively multiplying both sides by any term that appears in the denominator on either side. Finally, we note that the polynomial given by \eqref{constraint} is trivial only if \emph{every} process is a solution. Though omitted here for brevity, it can be show that the polynomial is non-trivial by constructing a counterexample.\end{proof}
\noindent It should be noted that the challenge for situations where $d>1$ arises in the representation of the invariant distribution as the solution to a matrix vector multiplication, and thus other proof techniques may be required.

\section{Quantifying Estimation Bias}
We have shown that DI estimators are reliant upon a condition that is unlikely to be satisfied. Thus, we now define two augmented notions of DI that do not require $X$ to be conditionally Markov in order to be accurately estimated.
\begin{definition}
The $k^{th}$-order causally conditional truncated directed information (TDI) from $Y$ to $X$ given $Z$ is defined as:
\begin{equation}
    I_T^{(k)}(Y^n \ra X^n \cc Z^n)
    \triangleq
    \sum_{i=1}^n I(X_i;Y_{i-k}^i \mid X^{i-1}_{i-k},Z^{i}_{i-k})
\end{equation}
\end{definition}
\noindent The TDI in its unconditional form is discussed in \cite{kontoyiannis2016estimating} in the context of plug-in estimators of DI. Should both Markovicity and conditional Markovicity hold for a collection of processes, then the TDI and the DI are equivalent. However, having shown that conditional Markovicity is unlikely to hold, we here name the TDI to emphasize that it is a fundamentally different measure from the traditional DI.
\begin{definition}
The $k^{th}$-order causally conditional partial directed information (PDI) from $Y$ to $X$ given $Z$ is defined as:
\begin{equation}
    I_P^{(k)}(Y^n \ra X^n \cc Z^n)
    \triangleq
    \sum_{i=1}^n I(X_i;Y_{i-k}^i \mid X^{i-1},Y^{i-k-1},Z^{i})
\end{equation}
\end{definition}
\noindent The PDI can be thought of as measuring the unique influence of the $k$ most recent samples of $Y$ on $X$. It is important to note that, under the assumption that $(X,Y,Z)$ are jointly $d$-Markov, we have that:
\begin{align*}
&I(X_i;Y_{i-k}^i\mid X^{i-1},Y^{i-k-1},Z^i)=\\
&H(X_i\mid X^{i-1}_{i-k-d},Y^{i-k-1}_{i-k-d},Z^i_{i-k-d})- H(X_i\mid X^{i-1}_{i-d},Y^{i}_{i-d},Z^i_{i-d})
\end{align*}

\noindent Thus, it is clear that estimators of DI can be extended to estimate the PDI without the additional requirement of conditional Markovicity, though the details of these estimators are postponed for future work. Defining the TDI and PDI rates $\bar{I}_T^{(k)}$ and $\bar{I}_P^{(k)}$ to be the normalized limits analogous with the DI rate given by \eqref{dirate}, we are able to bound the DI rate from above and below as follows:

\begin{theorem}\label{thm:dibounds}
Let $(X,Y,Z)$ be jointly stationary $d$-Markov. For $k_1\ge 1$ and $k_2\ge d$, the  causally conditional PDI and TDI rates bound the DI rate as:
\begin{equation}
\bar{I}_P^{(k_1)}(Y\ra X \cc Z) \le \ccdiryxz \le \bar{I}_T^{(k_2)}(Y\ra X \cc Z)
\end{equation}
with both bounds becoming equalities as $k_1,k_2\ra\infty$.
\end{theorem}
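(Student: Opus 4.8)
The plan is to prove both inequalities at the level of the individual time-$i$ summands of the finite-horizon quantities $\ccdiyxz$, $I_P^{(k_1)}(Y^n\ra X^n\cc Z^n)$, and $I_T^{(k_2)}(Y^n\ra X^n\cc Z^n)$, and then pass to the rates by summing, normalizing by $n$, and letting $n\to\infty$; the rates exist because joint stationarity forces each interior summand to converge and its Ces\`aro average along with it. The two inequalities have a genuinely different character, so I would establish them by different manipulations.

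For the lower bound $\bar{I}_P^{(k_1)}(Y\ra X \cc Z)\le \ccdiryxz$ I would work from the mutual-information form. Writing $Y^i=(Y^{i-k_1-1},Y_{i-k_1}^i)$ and applying the chain rule for conditional mutual information gives $I(X_i;Y^i\mid X^{i-1},Z^i)=I(X_i;Y^{i-k_1-1}\mid X^{i-1},Z^i)+I(X_i;Y_{i-k_1}^i\mid X^{i-1},Y^{i-k_1-1},Z^i)$. The second term on the right is exactly the $i$-th summand of $I_P^{(k_1)}$, and the first term is nonnegative, so dropping it yields the termwise inequality; note this direction needs no Markov assumption at all. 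Summing and taking the rate completes it.

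For the upper bound $\ccdiryxz\le \bar{I}_T^{(k_2)}(Y\ra X \cc Z)$ with $k_2\ge d$ I would instead use the entropy-difference form \eqref{di_hdiff}. The $i$-th summand of $I_T^{(k_2)}$ is $H(X_i\mid X^{i-1}_{i-k_2},Z^i_{i-k_2})-H(X_i\mid X^{i-1}_{i-k_2},Y^i_{i-k_2},Z^i_{i-k_2})$. Since $(X,Y,Z)$ is jointly $d$-Markov and $k_2\ge d$, the subtracted term collapses to $H(X_i\mid X^{i-1}_{i-d},Y^i_{i-d},Z^i_{i-d})$, which is precisely the subtracted term in the DI summand, while the leading term obeys $H(X_i\mid X^{i-1}_{i-k_2},Z^i_{i-k_2})\ge H(X_i\mid X^{i-1},Z^i)$ because conditioning on a longer past cannot increase entropy. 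Hence the TDI summand dominates the DI summand termwise. The hypothesis $k_2\ge d$ is \emph{essential} here: it is what guarantees the TDI window already contains the full $d$-Markov memory, so that the only discrepancy with the DI is the over-estimating reference entropy term.

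For the equalities as $k_1,k_2\to\infty$ I would show each gap vanishes. By stationarity the gaps take the clean limiting forms $\bar{I}_T^{(k_2)}(Y\ra X\cc Z)-\ccdiryxz=H(X_0\mid X^{-1}_{-k_2},Z^0_{-k_2})-H(X_0\mid X^{-1}_{-\infty},Z^0_{-\infty})$ and $\ccdiryxz-\bar{I}_P^{(k_1)}(Y\ra X\cc Z)=I(X_0;Y^{-k_1-1}_{-\infty}\mid X^{-1}_{-\infty},Z^0_{-\infty})$, the latter being the same chain-rule remainder pushed to the rate. The upper gap tends to $0$ immediately by martingale convergence of conditional entropy, as the conditioning $\sigma$-fields increase to the full past. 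The lower gap is the main obstacle: in the limit it equals $I(X_0;\mathcal{T}\mid X^{-1}_{-\infty},Z^0_{-\infty})$, where $\mathcal{T}=\bigcap_{k}\sigma(Y^{-k-1}_{-\infty})$ is the remote-past tail $\sigma$-field of $Y$, and forcing this to be $0$ requires that the present be independent of $\mathcal{T}$ — a mixing/ergodicity property rather than a consequence of the $d$-Markov structure alone. Under the paper's standing assumption of irreducible aperiodic (hence mixing) chains the tail field is trivial and the gap vanishes; I would flag this ergodicity requirement explicitly, since for non-ergodic stationary $d$-Markov processes the remote past can retain information and the lower bound need not become tight.
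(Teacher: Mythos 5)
Your two termwise inequalities are correct and are, in substance, exactly the paper's proof. Your chain-rule decomposition $I(X_i;Y^i\mid X^{i-1},Z^i)=I(X_i;Y^{i-k_1-1}\mid X^{i-1},Z^i)+I(X_i;Y^i_{i-k_1}\mid X^{i-1},Y^{i-k_1-1},Z^i)$ is just the mutual-information form of the paper's step \eqref{reduce1} (conditioning reduces entropy, applied to the leading entropy term of the PDI summand), and your handling of the upper bound --- collapsing the subtracted TDI entropy to the $d$-window via joint $d$-Markovicity and $k_2\ge d$, then using conditioning-reduces-entropy on the leading term --- is precisely the content of \eqref{reduce2}, \eqref{d_markov}, and \eqref{reduce3}, merely run in a different order. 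Your side remarks that the lower bound needs no Markov assumption and that $k_2\ge d$ is what makes the subtracted TDI term exact are both accurate and consistent with where the paper actually invokes each hypothesis.

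Where you genuinely depart from the paper is the asymptotic-equality claim: the paper's proof stops after the chain of inequalities and offers no argument that the bounds tighten as $k_1,k_2\ra\infty$, so you are supplying something the published proof omits. Your TDI-side argument is sound: by Markovicity the subtracted terms already agree, and for finite alphabets $H(X_0\mid X^{-1}_{-k_2},Z^0_{-k_2})$ decreases to $H(X_0\mid X^{-1}_{-\infty},Z^0_{-\infty})$, so the upper gap vanishes. Your PDI-side argument, however, has a genuine gap: identifying $\lim_{k_1}I(X_0;Y^{-k_1-1}_{-\infty}\mid X^{-1}_{-\infty},Z^0_{-\infty})$ with $I(X_0;\mathcal{T}\mid X^{-1}_{-\infty},Z^0_{-\infty})$ requires the interchange $\bigcap_{k}\bigl(\mathcal{G}\vee\sigma(Y^{-k-1}_{-\infty})\bigr)=\mathcal{G}\vee\mathcal{T}$ with $\mathcal{G}=\sigma(X^{-1}_{-\infty},Z^0_{-\infty})$, and intersections of joins of $\sigma$-fields do not commute in general; moreover, even when the joint chain is mixing, the $Y$-process \emph{conditioned on the infinite $(X,Z)$ past} need not inherit that mixing, so ``tail field trivial, hence gap zero'' does not follow from the reverse-martingale step you invoke. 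A repair would argue the decay directly, e.g.\ bounding the remainder by $I(Y^{i-1}_{i-d};Y^{i-k_1-1}\mid X^{i-1},Z^i)$ using the $d$-Markov structure and then controlling it with the mixing coefficients of the finite-state chain. That said, your instinct to flag an ergodicity requirement is well taken: the theorem as stated assumes only joint stationarity and $d$-Markovicity, the paper proves nothing about tightness, and the irreducibility and aperiodicity that would support a mixing argument appear in Theorem \ref{thm:necessary} and the simulations but not in this theorem's hypotheses.
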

\begin{proof}
Note that for any $k_1\ge 1$ and $k_2\ge d$:
\begin{align}
&H(X_i\mid X^{i-1},Y^{i-k_1-1},Z^i) -
H(X_i\mid X^{i-1},Y^{i},Z^i) \label{pdiproof}\\
&\le H(X_i\mid X^{i-1},Z^i) -H(X_i\mid X^{i-1},Y^{i},Z^i)\label{reduce1} \\
&\le H(X_i\mid X^{i-1}_{i-k_2},Z^i_{i-k_2}) -H(X_i\mid X^{i-1},Y^{i},Z^i)\label{reduce2} \\
&= H(X_i\mid X^{i-1}_{i-k_2},Z^i_{i-k_2}) -H(X_i\mid X^{i-1}_{i-d},Y^{i}_{i-d},Z^i_{i-d}) \label{d_markov}\\
&\le H(X_i\mid X^{i-1}_{i-k_2},Z^i_{i-k_2}) -H(X_i\mid X^{i-1}_{i-k_2},Y^{i}_{i-k_2},Z^i_{i-k_2}) \label{reduce3}
\end{align}
\noindent where \eqref{reduce1}, \eqref{reduce2}, and \eqref{reduce3} follow from conditioning reduces entropy and \eqref{d_markov} follows from joint $d$-Markovicity of $(X,Y,Z)$. Taking the sum over $i=1,\dots,n$ and the normalized limit as $n\ra\infty$ gives the desired result, noting that \eqref{pdiproof}, \eqref{reduce1}, and \eqref{reduce3} become the PDI, DI, and TDI rates, respectively.
\end{proof}
\section{Simulations}
In the above sections we have demonstrated that while one cannot reasonably expect data to satisfy the necessary assumptions for obtaining unbiased estimates of DI, the TDI and PDI can be used to provide upper and lower bounds for the true DI. A natural next question is, how significant is the difference between PDI and TDI? To address this question, we simulate a pair of jointly stationary Markov discrete processes in four settings, each characterized by a particular simplification of the generative distribution $p(X_i,Y_i\mid X^{i-1},Y^{i-1})$:
\begin{align*}
    & \ p(X_i\mid Y_{i-1})p(Y_i\mid Y_{i-1}) \tag{S1}\\
    & \ p(X_i\mid X_{i-1},Y_{i-1})p(Y_i\mid Y_{i-1})  \tag{S2} \\
    & \ p(X_i\mid X_{i-1},Y_{i-1})p(Y_i\mid X_{i-1},Y_{i-1})  \tag{S3} \\
    & \ p(X_i\mid X^{i-1}_{i-2},Y^{i-1}_{i-2})p(Y_i\mid X^{i-1}_{i-2},Y^{i-1}_{i-2}) \tag{S4} 
\end{align*}
\noindent For each of these graphical structures, we conducted 100 experiments with $|\calx|=|\caly|=4$ for (S1)-(S3) and $|\calx|=|\caly|=3$ for (S4). In each experiment, the parameters were sampled as independent exponential random variables and then appropriately normalized, yielding parameters drawn uniformly from the probability simplex \cite{devroye1986non}. Using the sampled parameters, sequences $(x^n,y^n)$ were generated with $n=300000$ (large enough to ensure that accurate estimates of the TDI and PDI could be obtained). $\bar{I}_T^{(k)}(Y\ra X)$ and $\bar{I}_P^{(k)}(Y\ra X)$ were estimated using CTW estimators in the style of $\hat{I}_3$ in \cite{jiao2013universal} for $k=d$, $d+1$, and $d+2$\footnote{Code and additional figures can be found in the following repository: \url{https://github.com/gabeschamberg/directed_info_bias}.}.
\begin{figure}[b]
  \includegraphics[keepaspectratio, width=\linewidth]{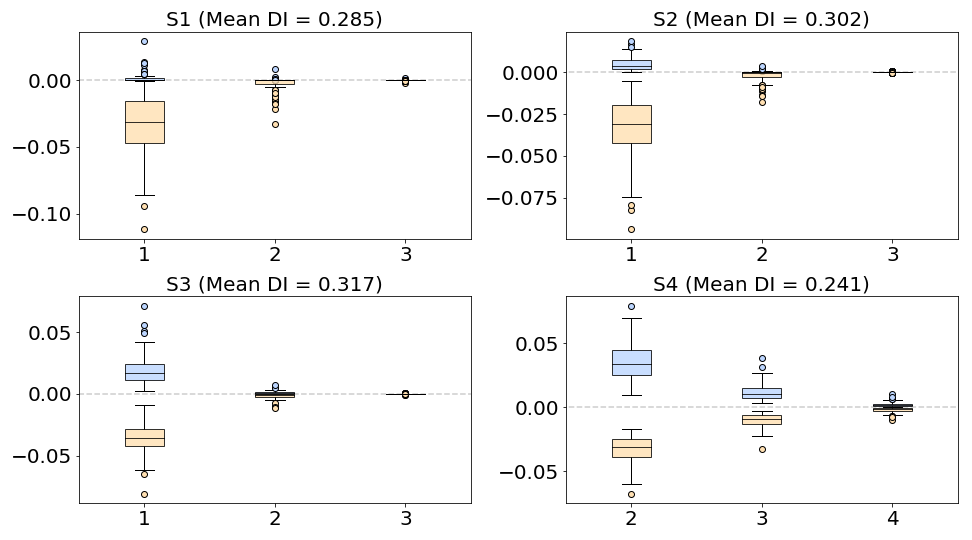}
  \caption{Difference between TDI and DI (blue) and PDI and DI (orange) for different values of $k$ (x-axis) under different process structures (panels).} 
  \label{fig:boxplot} 
\end{figure}
Figure \ref{fig:boxplot} shows boxplots representing $\hat{\bar{I}}_T^{(k)}(Y\ra X)-\hat{\bar{I}}(Y\ra X)$ and $\hat{\bar{I}}_P^{(k)}(Y\ra X)-\hat{\bar{I}}(Y\ra X)$ for varying values of $k$ along with the mean (across trials) DI rate, which was determined by the value converged upon by the TDI and PDI. We can see that the TDI is very close to the true DI for simpler structures (i.e. (S1) and (S2)), and in these cases the PDI is not a very tight lower bound. However, for the fully connected structures (S3) and (S4) the TDI may be considerably larger than the true DI and the PDI serves as a useful lower bound for the true DI. This figure suggests that while (S4) is not covered by Theorem \ref{thm:necessary}, alternative proof techniques may exist for demonstrating that the results hold for $d>1$.



\printbibliography

\end{document}